\theoremstyle{plain}
\newtheorem{theorem}{Theorem}
\newtheorem{lemma}[theorem]{Lemma}
\newtheorem{definition}[theorem]{Definition}
\newtheorem{corollary}[theorem]{Corollary}
\newenvironment{proof}
{\begin{trivlist}\item[]{{\sc Proof.}}}{\hfill{$\square$}\noindent\end{trivlist}}
\begin{document}

\title{The average representation -- a cornucopia of power indices?}
\author{Serguei Kaniovski\footnote{
Austrian Institute of Economic Research (WIFO), Austria. \newline E-mail: {serguei.kaniovski@wifo.ac.at}.}
\; and \;
Sascha Kurz\footnote{
Department of Mathematics, University of Bayreuth, Germany. \newline E-mail: {sascha.kurz@uni-bayreuth.de}. (Corresponding author.)}
}
\date{\today}
\maketitle

\begin{abstract}
For the classical power indices there is a disproportion between power and relative weights, in general. We introduce 
two new indices, based on weighted representations, which are proportional to suitable relative weights and which also 
share several important properties of the classical power indices. Imposing further restrictions on the set of representations 
may lead to a whole family of such indices. 
    
  \medskip
  
  \noindent
  \textit{Keywords:} voting power, power indices, proportionality, weighted majority games, representations\\
  \textit{JEL:} C71, D71
\end{abstract}

\section{Introduction}
The observation, that the distribution of power in weighted majority games differs from the distribution of voting weights, has 
motivated the development of a theory of power measurement. A famous example considers three voters, having $50$, $49$, and $1$ 
votes. The motion is passed if the total number of votes in favor exceeds $50$. Since any two voters, but none alone, can pass 
the motion, any reasonable power index assigns equal power to all three voters.

The power distribution in the above example markedly differs from the relative weight distribution: 
$\left\Vert\left(\frac{50}{100},\frac{49}{100},\frac{1}{100}\right)-\left(\frac{1}{3},\frac{1}{3},\frac{1}{3}\right)\right\Vert_1
=\frac{97}{150}\approx 0.65$. One reason for this disagreement is the fact that voting weights are not unique. For three 
voters having $100$ votes in total, there are $1176$ integer valued weight distributions being extendable by a suitable 
quota to the same game.  
An example is given by the weights $34$, $33$, $33$ and a quota of $60$. There are $13872$ possibilities to represent the game 
if the (integer) quota is considered to be part of the specification.

Having plenty of representations to choose from, can we choose voting weights that accurately reflect power measured by some 
index? The theoretical literature shows that, in general, we cannot, although there are particular cases when it may be 
possible. Recently \cite{houy2014geometry} have characterized a class of weighted majority games, which admit a 
representation using their respective Banzhaf distribution. For the nucleolus, we known that $(q(x^\star,v),x^\star)$ is a 
representation of a constant-sum weighted majority game, where $x^{\star}(v)$ denotes the nucleolus of $v$, and 
$q(x^\star,v)$ denotes the corresponding maximum excess \cite[Theorem 20.52]{game_theory_book}. If the weights are close to 
the average weight of the voters, then the nucleolus is close to the relative weight distribution; the two may even coincide 
under certain conditions, see \cite{kurz2013nucleolus}. The existing power indices are not representation compatible. One
exception is the recently introduced minimum sum representation (MSR) index \cite{freixas2014minimum}. Also the 
Colomer index \cite{colomer1995paradox} uses weights of a majority game in its specification, but the index depends on the 
given representation, instead of the underlying simple game.

In this paper, we show how to construct representation compatible power indices for weighted majority games. 

\section{Games and representations}
A simple game $v$ is a mapping $v:2^N\rightarrow \{0,1\}$, where $N=\{1,\dots,n\}$ is the set of voters, such that
$v(\emptyset)=0$, $v(N)=1$, and $v(S)\le v(T)$ for all $S\subseteq T\subseteq N$. The subsets $S\subseteq N$ are called 
coalitions of $v$. We call a coalition $S$ winning if $v(S)=1$, and losing otherwise. If $S$ is a winning coalition and 
none of its proper subsets is winning, it is called a minimal winning coalition. Similarly, if $T$ is a losing coalition 
and none of its proper supersets is losing, it is called a maximal losing coalition. A voter $i\in N$ with $v(S)=v(S\cup\{i\})$ 
for all $S\subseteq N\backslash\{i\}$ is called a dummy (or null player by some authors).

A weighted majority game is a simple game $v$, such that there exist real numbers $w_1,\dots,w_n\ge 0$ and 
$q>0$ with $\sum_{s\in S} w_s\ge q$ for all winning coalitions $S\subseteq N$ and $\sum_{s\in T} w_s< q$ for 
all losing coalitions $T\subseteq N$.  We write $v=[q;w_1,\dots,w_n]$, where we call $(q,w_1,\dots,w_n)$ a
representation of $v$. A weight vector $(w_1,\dots,w_n)$ is called feasible for $v$, if there
exists a quota $q$ such that $(q;w_1,\dots,w_n)$ is a representation of $v$. For our initial example $[2;1,1,1]$, the 
weight vector $(49,48,3)$ is feasible, while $(50,25,25)$ is not feasible.

We collect some basic well known facts about representations of weighted majority games:
\begin{lemma}
  Each weighted majority game $v$ admits a representation $(q,w_1,\dots,w_n)$ with
  $w_1,\dots,w_n\ge 0$, $q>0$, and 
  \begin{enumerate}
    \item[(1)] $\sum_{i=1}^n w_i=1$, $q\in(0,1]$;
    \item[(2)] $\sum_{i=1}^n w_i=1$, $q\in(0,1]$, and $w_i=0$ for all dummies;
    \item[(3)] $q\in\mathbb{N}$, $w_i\in \mathbb{N}$.
  \end{enumerate}
\end{lemma}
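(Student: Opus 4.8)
The plan is to handle all three items by starting from an arbitrary representation $(q,w_1,\dots,w_n)$, which exists by the very definition of a weighted majority game, and then transforming it. The two facts I would use repeatedly are that multiplying $q$ and all $w_i$ by the same positive constant preserves the defining inequalities (hence produces another representation), and that $\sum_{i=1}^n w_i\ge q>0$ because $N$ is a winning coalition, as $v(N)=1$. In particular $W:=\sum_{i=1}^n w_i>0$, so dividing through by $W$ is always legitimate.

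For item (1) I would simply rescale by $1/W$: the weights $w_i/W$ are nonnegative and sum to $1$, and the quota $q/W$ lies in $(0,1]$ since $0<q\le W$. For item (2) I would first show that the dummies may be assigned weight $0$ and then invoke (1). Writing $D$ for the set of dummies and setting $\tilde w_i=0$ for $i\in D$ and $\tilde w_i=w_i$ otherwise, the dummy property gives $v(S)=v(S\setminus D)$ for every coalition $S$, while $\sum_{i\in S}\tilde w_i=\sum_{i\in S\setminus D}w_i$. Hence if $S$ is winning then $S\setminus D$ is winning and $\sum_{i\in S}\tilde w_i\ge q$, and if $S$ is losing then $S\setminus D$ is losing and $\sum_{i\in S}\tilde w_i<q$; so $(q,\tilde w_1,\dots,\tilde w_n)$ is again a representation, now with all dummy weights equal to $0$. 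Rescaling by $1/\sum_i\tilde w_i$, which is positive since $N$ is still winning, completes (2).

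For item (3) — which I expect to be the only genuine obstacle — integrality and the strict inequality for losing coalitions do not combine directly, so the argument needs a numerical gap. I would first note that it suffices to control the minimal winning coalitions $\mathcal{W}$ and the maximal losing coalitions $\mathcal{L}$, since every winning coalition contains a member of $\mathcal{W}$ and every losing coalition is contained in a member of $\mathcal{L}$ (using nonnegativity of the weights). Starting from a real representation, I would set $\delta:=q-\max_{T\in\mathcal{L}}\sum_{i\in T}w_i>0$ and rescale by $1/\delta$ to obtain a real solution $(Q,W_1,\dots,W_n)$ of the system
\[
  \sum_{i\in S}W_i\ge Q \;\;(S\in\mathcal{W}),\qquad
  \sum_{i\in T}W_i\le Q-1 \;\;(T\in\mathcal{L}),\qquad
  W_i\ge 0.
\]
This system has integer coefficients, so its solution set is a rational polyhedron, and being nonempty it contains a rational point.

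The final step is to clear denominators: multiplying a rational solution by the least common multiple $M$ of its denominators yields integers, and the strict inequality survives because for $T\in\mathcal{L}$ one gets $\sum_{i\in T}MW_i\le MQ-M<MQ$, while $\sum_{i\in S}MW_i\ge MQ$ for $S\in\mathcal{W}$; moreover $MQ\ge M\ge 1>0$. Thus $(MQ;MW_1,\dots,MW_n)$ is an integer representation. The one point deserving care is the assertion that a nonempty rational polyhedron contains a rational point, which I would either cite as a standard linear-programming fact or justify directly via Fourier–Motzkin elimination; once it is in place, the shift by $1$ in the losing-coalition constraints does all the work of turning the strict inequality into an integrally robust one.
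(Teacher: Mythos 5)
Your proof is correct. Note that the paper itself offers no proof of this lemma --- it is introduced with ``We collect some basic well known facts'' and stated without argument --- so there is no authorial proof to compare against; your write-up supplies exactly the standard justification. Items (1) and (2) are handled the canonical way (scaling, and zeroing out dummy weights after observing $v(S)=v(S\setminus D)$, which follows from the dummy definition by removing dummies one at a time), and for item (3) your route --- open a gap of size $1$ at the losing coalitions by rescaling with $\delta=q-\max_{T\in\mathcal{L}}\sum_{i\in T}w_i>0$, pass to a rational point of the resulting rational polyhedron, and clear denominators --- is the standard argument; the rationality step is legitimately a citable LP fact or a Fourier--Motzkin exercise, as you say. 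One small step you use silently: the claim $MQ\ge M$ needs $Q\ge 1$, i.e.\ $\delta\le q$, which holds because $\emptyset$ is losing, so $\mathcal{L}\neq\emptyset$ and the maximum losing weight is $\ge 0$ by nonnegativity of the weights; this deserves a line but is immediate.
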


We call a representation satisfying the conditions of~(1) a normalized representation, and those satisfying
the conditions of (3) an integer representation. A representation with $w_i=0$ for all dummies $i\in N$ is called dummy 
revealing.\footnote{The problem of checking whether a voter is a dummy in a general (integer)
representation it coNP-complete \cite[Theorem 4.4]{computational_aspects}.}

Algorithmic checks and descriptions whether a given simple game is weighted have been studied extensively in the literature, see 
e.g.\ \cite{0943.91005}.

\begin{lemma}
  \label{lemma_set_of_normalized_feasible_weights}
  The set of all normalized weight vectors $w\in\mathbb{R}^n_{\ge 0}$, $\sum_{i=1}^n w_i=1$ being feasible for a given 
  weighted majority game $v$ is given by the intersection
  $\sum_{i\in S} w_i>\sum_{i\in T} w_i$ for all pairs $(S,T)$, where $S$ is a minimal winning and $T$ 
  is a maximal losing coalition of $v$.
\end{lemma}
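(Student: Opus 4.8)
The plan is to unfold the definition of feasibility and then prove the two inclusions separately. By definition, a normalized vector $w$ is feasible exactly when some quota $q>0$ satisfies $\sum_{i\in S}w_i\ge q$ for every winning coalition $S$ and $\sum_{i\in T}w_i<q$ for every losing coalition $T$. The stated description restricts attention to minimal winning and maximal losing coalitions only, so the real content is to show that controlling these extremal coalitions already controls all of them.

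For the forward inclusion I would suppose $w$ is feasible with witnessing quota $q$. If $S$ is minimal winning and $T$ is maximal losing, then $S$ is in particular winning and $T$ is losing, so $\sum_{i\in S}w_i\ge q>\sum_{i\in T}w_i$, which gives the strict inequality for this pair. Since $(S,T)$ was arbitrary, $w$ lies in the claimed intersection.

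For the reverse inclusion I would exploit $w\ge 0$ to reduce the extremal problems to the distinguished coalitions. Set $W_{\min}=\min\{\sum_{i\in S}w_i:S\text{ winning}\}$ and $L_{\max}=\max\{\sum_{i\in T}w_i:T\text{ losing}\}$; both exist because $v(N)=1$ and $v(\emptyset)=0$ make the two families nonempty and finite. Monotonicity of $v$ together with nonnegativity of the weights shows that $W_{\min}$ is attained at some minimal winning coalition $S^\star$ (any winning $S$ contains a minimal winning subset of no larger weight) and that $L_{\max}$ is attained at some maximal losing coalition $T^\star$ (any losing $T$ is contained in a maximal losing superset of no smaller weight). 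Applying the hypothesis to the single pair $(S^\star,T^\star)$ then yields $W_{\min}>L_{\max}$.

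It remains to manufacture a quota. Since $\emptyset$ is losing with weight $0$ we have $L_{\max}\ge 0$, hence $W_{\min}>L_{\max}\ge 0$, and any choice $q\in(L_{\max},W_{\min}]$ is strictly positive. For such $q$ every winning coalition has weight at least $W_{\min}\ge q$ and every losing coalition weight at most $L_{\max}<q$, so $(q;w_1,\dots,w_n)$ is a representation of $v$ and $w$ is feasible. The step requiring care is precisely this reduction: the hypothesis is stated only for minimal winning and maximal losing coalitions, and the argument hinges on the observation that, under $w\ge 0$, the worst winning and the worst losing coalitions are exactly of these two types, so a single extremal pair already forces the global separation $L_{\max}<W_{\min}$.
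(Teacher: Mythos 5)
Your proof is correct: both inclusions are argued soundly, and the key reduction --- that under $w\ge 0$ the minimum winning weight is attained at a minimal winning coalition and the maximum losing weight at a maximal losing coalition, so a single extremal pair $(S^\star,T^\star)$ forces $L_{\max}<W_{\min}$ and any $q\in(L_{\max},W_{\min}]$ is a strictly positive witnessing quota --- is exactly the standard justification. The paper itself states this lemma without proof, as one of several ``basic well known facts,'' so your argument is precisely the one the authors leave implicit; there is no divergence to report.
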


\begin{lemma}
  \label{lemma_set_of_normalized_representations}
  The set of all normalized representations $(q,w)\in\mathbb{R}^{n+1}_{\ge 0}$, $q\in(0,1]$, $\sum_{i=1}^n w_i=1$ 
  representing a given weighted majority game $v$ is given by the intersection
  $
    \sum_{i\in S} w_i\ge q$, $\sum_{i\in T} w_i<q
  $
  for all minimal winning coalitions $S$ and all maximal losing coalitions $T$.
\end{lemma}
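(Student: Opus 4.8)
The plan is to prove the asserted set equality by establishing the two inclusions separately. Write $A$ for the set of normalized representations of $v$ and $B$ for the set of normalized pairs $(q,w)$ (i.e.\ $w\ge 0$, $q\in(0,1]$, $\sum_{i=1}^n w_i=1$) satisfying $\sum_{i\in S}w_i\ge q$ and $\sum_{i\in T}w_i<q$ for every minimal winning coalition $S$ and every maximal losing coalition $T$. The goal is to show $A=B$.

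The inclusion $A\subseteq B$ is immediate from the definitions. If $(q,w)$ is a representation of $v$, then the defining inequalities $\sum_{i\in S}w_i\ge q$ (for every winning $S$) and $\sum_{i\in T}w_i<q$ (for every losing $T$) hold for \emph{all} winning and losing coalitions; since minimal winning and maximal losing coalitions are in particular winning and losing, respectively, the restricted inequalities defining $B$ follow at once.

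The substantive direction is $B\subseteq A$. Here I would fix $(q,w)\in B$ and verify the full representation inequalities for an arbitrary winning coalition $W$ and an arbitrary losing coalition $L$. Because $N$ is finite and $v$ is monotone (that is, $v(S)\le v(T)$ for $S\subseteq T$), every winning $W$ contains an inclusion-minimal winning subset $S\subseteq W$, which is then a minimal winning coalition, and every losing $L$ is contained in an inclusion-maximal losing superset $T\supseteq L$ inside $N$, which is then a maximal losing coalition. Exploiting the nonnegativity $w_i\ge 0$, monotonicity of the partial weight sums then yields
$$\sum_{i\in W}w_i\ \ge\ \sum_{i\in S}w_i\ \ge\ q \qquad\text{and}\qquad \sum_{i\in L}w_i\ \le\ \sum_{i\in T}w_i\ <\ q,$$
which are exactly the two conditions required for $(q,w)$ to be a representation of $v$, so $(q,w)\in A$.

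I expect no genuine obstacle in this argument; the only points deserving a moment's care are the existence of the minimal winning subset and the maximal losing superset, both guaranteed by finiteness of $2^N$, and the explicit use of $w_i\ge 0$, without which the monotonicity of the partial weight sums would break down. The ambient normalization constraints $q\in(0,1]$ and $\sum_{i=1}^n w_i=1$ play no active role beyond fixing the common universe of $A$ and $B$; they remain consistent since $\emptyset$ is losing with weight $0<q$ and $N$ is winning with weight $1\ge q$.
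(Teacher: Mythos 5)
Your proof is correct: the restriction direction is immediate, and the extension direction correctly uses a minimal winning subset of each winning coalition and a maximal losing superset of each losing coalition, together with $w_i\ge 0$, to recover the full system of representation inequalities. The paper states this lemma as a basic well-known fact and gives no proof at all, so there is nothing to compare against; your argument is exactly the standard one the authors implicitly rely on, and you rightly flag the two points that need care (existence of the minimal/maximal coalitions via finiteness of $2^N$, and the role of nonnegativity).
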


\section{Power indices}
Let $\mathcal{S}_n$ denote the set of simple games on $n$ voters, and $\mathcal{W}_n$ the set of weighted majority games on 
$n$ voters. A power index for $\mathcal{C}\in\{\mathcal{S}_n,\mathcal{W}_n\mid n\in\mathbb{N}\}$ is a 
mapping $g:\mathcal{C}\to\mathbb{R}^n$, where $n$ denotes the number of voters in each game of $\mathcal{C}$. Usually, 
we define a vector-valued power index by defining its elements $g_i$, the power of a voter $i\in N$. The Shapley-Shubik 
index is given by
$$
  SSI_i(v)=\sum_{S\subseteq N\backslash\{i\}} \frac{|S|!(|N|-1-|S|)!}{|N|!}\cdot (v(S\cup\{i\})-v(S)).
$$

\begin{definition}
  \label{def_des_properties}
  Let $g:\mathcal{C}\rightarrow \mathbb{R}^{|N|}=(g_i)_{i\in N}$ be a power index for $\mathcal{C}$. We say that
  \begin{enumerate}
    \item[(1)] $g$ is symmetric: if for all $v\in\mathcal{C}$ and any bijection $\tau:N\rightarrow \tau$ we have
               $g_{\tau(i)}(\tau v)=g_i(v)$, where $\tau v(S)=v(\tau(S))$ for all $S\subseteq N$;
    \item[(2)] $g$ is positive: if for all $v\in\mathcal{C}$ we have $g_i(v)\ge 0$ and $g(v)\neq 0$;
    \item[(3)] $g$ is efficient: if for all $v\in\mathcal{C}$ we have $\sum_{i=1}^n g_i(v)=1$;
    \item[(4)] $g$ satisfies the dummy property: if for all $v\in\mathcal{C}$ and all dummies $i$ of $v$ we have 
               $g_i(v)=0$.
  \end{enumerate}
\end{definition}

The Shapley-Shubik index is symmetric, positive, efficient, and satisfies the dummy property.

In this paper, we consider power indices that can be defined on the set of weighted games. Having proportionality of weights 
and power in mind we define:

\begin{definition}
  \label{def_representation_compatible}
  A power index $g:\mathcal{W}_n\rightarrow\mathbb{R}^n$ for weighted majority games on $n$ voters is called representation compatible
  if $(g_1(v),\dots,g_n(v))$ is feasible for all $v\in\mathcal{W}_n$.
\end{definition}

We remark that the Shapley-Shubik index is representation compatible for $\mathcal{W}_n$ if and only if $n\le 3$. Below,
we list the weighted majority games with up to 3 voters (in minimum sum integer representation), and the representation 
given by the Shapley-Shubik vector.
\begin{eqnarray*}
  [1;1]=\left[1;1\right] &   [1;1,0,0]=\left[\frac{6}{6};\frac{6}{6},\frac{0}{6},\frac{0}{6}\right] & 
  [2;1,1,1]=\left[\frac{4}{6};\frac{2}{6},\frac{2}{6},\frac{2}{6}\right] \\[1.5mm] 
  [1;1,0]=\left[\frac{2}{2};\frac{2}{2};\frac{0}{2}\right] & [1;1,1,0]=\left[\frac{3}{6};\frac{3}{6},\frac{3}{6},\frac{0}{6}\right] &
  [3;1,1,1]=\left[\frac{6}{6};\frac{2}{6},\frac{2}{6},\frac{2}{6}\right] \\[1.5mm]
  [1;1,1]=\left[\frac{1}{2};\frac{1}{2},\frac{1}{2}\right] & [2,1,1,0]=\left[\frac{6}{6};\frac{3}{6},\frac{3}{6},\frac{0}{6}\right] & 
  [3;2,1,1]=\left[\frac{5}{6};\frac{4}{6},\frac{1}{6},\frac{1}{6}\right] \\[1.5mm]
  [2,1,1]=\left[\frac{2}{2};\frac{1}{2},\frac{1}{2}\right] & [1;1,1,1]=\left[\frac{2}{6};\frac{2}{6},\frac{2}{6},\frac{2}{6}\right] &
  [2;2,1,1]=\left[\frac{2}{6};\frac{4}{6},\frac{1}{6},\frac{1}{6}\right]
\end{eqnarray*}

For $n\ge 4$, consider the example $v=[3;2,1,1,1,0,\dots,0]$ with $n-4$ dummies. The Shapley-Shubik index of $v$ is 
given by $\left(\frac{1}{2},\frac{1}{6},\frac{1}{6},\frac{1}{6},0,\dots,0\right)$. Since $\{2,3,4\}$ is a winning coalition with
weight $\frac{1}{2}$, and $\{1\}$ is a losing coalition with an equal weight, the Shapley-Shubik vector cannot be
feasible.

\section{Representation compatible power indices}

The aim of this paper was to design power indices, which are representation compatible. Given a set of representations
of the same weighted majority game $v$, each convex combination gives a representation of $v$ too. So a simple idea
to construct a representation compatible power index is to specify the set of representations and the weights of the
convex combination.

\subsection{The average weight index}

\begin{definition}
  The average weight index of a weighted majority game $v$ is the average of all normalized\footnote{Taking all 
  weight vectors instead of the normalized ones does not make a difference.} weight vectors which are feasible 
  for $v$.
\end{definition}

For $[3;2,1,1]$ we have already
mentioned the sets of minimal winning and maximal losing coalitions. Applying 
Lemma~\ref{lemma_set_of_normalized_feasible_weights} gives the constraints\\[-3mm]
\begin{eqnarray*}
  w_1+w_2 > w_1 &\Longleftrightarrow& w_2>0\\ 
  w_1+w_3 > w_1 &\Longleftrightarrow& w_3>0\\
  w_1+w_2 > w_2+w_3 &\Longleftrightarrow& w_1>w_3\\
  w_1+w_3 > w_2+w_3 &\Longleftrightarrow& w_1>w_2,
\end{eqnarray*} 
in addition to $w_1,w_2,w_3\ge 0$ and $w_1+w_2+w_3=1$. Eliminating the variable $w_3$ via $w_3=1-w_1-w_2$ and removing the
redundant constraints leaves
\begin{eqnarray*}
  w_2>0            &\Longleftrightarrow& w_2>0\\
  1-w_1-w_2>0      &\Longleftrightarrow& w_2<1-w_1\\
  w_1>1-w_1-w_2    &\Longleftrightarrow& w_2>1-2w_1\\
  w_1>w_2          &\Longleftrightarrow& w_2<w_1
\end{eqnarray*}
Since we need $1-2w_1<w_1$ and $1-w_1>0$, we have $w_1\in\big(\frac{1}{3},1\big)$.
For $w_1\in\big(\frac{1}{3},\frac{1}{2}\big)$ the conditions condense to $w_2\in \big(1-2w_1,w_1\big)$ and 
for $w_1\in\big[\frac{1}{2},1\big)$ the conditions condense to $w_2\in \big(0,1-w_1\big)$.

The (scaled) average weight for voter~$1$ is given by
$$
  \int_{\frac{1}{3}}^{\frac{1}{2}}\int_{1-2w_1}^{w_1} w_1\operatorname{d}w_2\,\operatorname{d}w_1+
  \int_{\frac{1}{2}}^{1}\int_{0}^{1-w_1} w_1\operatorname{d}w_2\,\operatorname{d}w_1
  =\frac{1}{54}+\frac{1}{12}=\frac{11}{108}.
$$
For voter~$2$ we similarly obtain
$$
  \int_{\frac{1}{3}}^{\frac{1}{2}}\int_{1-2w_1}^{w_1} w_2\operatorname{d}w_2\,\operatorname{d}w_1+
  \int_{\frac{1}{2}}^{1}\int_{0}^{1-w_1} w_2\operatorname{d}w_2\,\operatorname{d}w_1
  =\frac{1}{48}+\frac{5}{432}=\frac{7}{216}.
$$
Since the volume of the feasible region is given by
$$
  \int_{\frac{1}{3}}^{\frac{1}{2}}\int_{1-2w_1}^{w_1} 1\operatorname{d}w_2\,\operatorname{d}w_1+
  \int_{\frac{1}{2}}^{1}\int_{0}^{1-w_1} 1\operatorname{d}w_2\,\operatorname{d}w_1
  =\frac{1}{8}+\frac{1}{24}=\frac{1}{6},
$$
we have
$$
  \int_{\frac{1}{3}}^{\frac{1}{2}}\int_{1-2w_1}^{w_1} w_3\operatorname{d}w_2\,\operatorname{d}w_1+
  \int_{\frac{1}{2}}^{1}\int_{0}^{1-w_1} w_3\operatorname{d}w_2\,\operatorname{d}w_1
  =\frac{1}{6}-\frac{11}{108}-\frac{7}{216}=\frac{7}{216}.
$$
Normalizing, or dividing by the volume of the feasible region, yields the power distribution
$
  \left(\frac{11}{18},\frac{7}{36},\frac{7}{36}\right),
$
with a norm-$1$-distance of $\frac{1}{9}$ to the respective Shapley-Shubik vector $\left(\frac{2}{3},\frac{1}{6},\frac{1}{6}\right)$.

Actually we have dealt with the polyhedron 
$P=\big\{(w_1,w_2)\in \mathbb{R}^2\mid w_2\ge 0, w_2\le 1-w_1, w_2\ge 1-2w_1, w_2\le w_1\big\}$,
i.e., we have replaced strict inequalities by the corresponding non-strict inequalities, 
and considered the integrals $\int_p w_1\,\operatorname{d}w_1$, $\int_p w_2\,\operatorname{d}w$, and 
$\int_p 1\,\operatorname{d}w$. This modification is permitted since in general the polyhedron $P$ (after the elimination of 
variable $w_n$) is full dimensional, i.e., it has dimension $n-1$, so that the subspaces where equality hold in of of 
the inequalities have volume zero.

\begin{lemma}
  \label{lemma_full_dimension}
  For each weighted majority game $v$ there exist positive real numbers $\tilde{q},\tilde{w}_1,\dots,\tilde{w}_{n-1}$ 
  and a parameter $\alpha>0$ such that 
  \begin{equation}
    \left(\tilde{q}+\delta_0,\tilde{w}_1+\delta_1,\dots,\tilde{w}_{n-1}+\delta_{n-1},1-\sum_{i=1}^{n-1}\left(\tilde{w}_i+\delta_i\right)\right)
  \end{equation}
  is a normalized representation of $v$ for all $\delta_i\in[-\alpha,\alpha]$, $0\le i\le n-1$.  
\end{lemma}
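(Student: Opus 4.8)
The plan is to reduce the statement to a single claim: the game $v$ admits a normalized representation $(q,w)$ in which \emph{every} defining inequality is strict, i.e.\ $w_i>0$ for all $i$, $0<q<1$, and $\sum_{i\in S}w_i>q>\sum_{i\in T}w_i$ for every minimal winning coalition $S$ and every maximal losing coalition $T$. Once such a point is found, the lemma follows almost immediately. After eliminating $w_n=1-\sum_{i=1}^{n-1}w_i$, each of the finitely many conditions characterizing a normalized representation (those of Lemma~\ref{lemma_set_of_normalized_representations}, together with $w_i\ge 0$ and $q\in(0,1]$) becomes an affine function of $(\delta_0,\dots,\delta_{n-1})$ that is strictly satisfied at $\delta=0$. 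By continuity each condition retains a positive margin in a neighborhood of $\delta=0$, and since there are only finitely many of them, I can pick a single $\alpha>0$ small enough that all of them remain satisfied for every $\delta_i\in[-\alpha,\alpha]$. This is precisely the asserted box, with $\tilde q=q$ and $\tilde w_i=w_i$ for $1\le i\le n-1$; note that normalization is automatic since the last coordinate is defined as $1-\sum_{i=1}^{n-1}(\tilde w_i+\delta_i)$.

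It remains to produce the strict representation, which is where the actual work lies, and I would build it in two stages. First, secure strict positivity of all weights. Starting from any feasible weight vector $w^\star$ (one exists because $v$ is weighted, e.g.\ via part (1) of the first lemma), consider the perturbation $w'=(1-\varepsilon)w^\star+\tfrac{\varepsilon}{n}(1,\dots,1)$ toward the barycenter of the simplex. Each feasibility inequality $\sum_{i\in S}w_i>\sum_{i\in T}w_i$ of Lemma~\ref{lemma_set_of_normalized_feasible_weights} is linear and strictly satisfied at $w^\star$; as these are finitely many and depend continuously on $\varepsilon$, they persist for all sufficiently small $\varepsilon>0$, while simultaneously $w_i'\ge \varepsilon/n>0$. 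Hence $w'$ is feasible with all coordinates strictly positive. Crucially, this step also disposes of dummies: being a dummy is a property of the game, not of a particular representation, so a dummy may legitimately carry positive weight here.

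Second, choose the quota strictly inside its admissible interval. By feasibility of $w'$ and Lemma~\ref{lemma_set_of_normalized_representations}, the quantities $a=\max_{T}\sum_{i\in T}w'_i$ (over maximal losing $T$) and $b=\min_{S}\sum_{i\in S}w'_i$ (over minimal winning $S$) satisfy $a<b$; both extrema exist because $v(\emptyset)=0$ and $v(N)=1$ guarantee at least one losing and one winning coalition. Setting $q=\tfrac{1}{2}(a+b)$ yields $\sum_{i\in S}w'_i\ge b>q>a\ge\sum_{i\in T}w'_i$ with strict inequalities throughout. Moreover $q>0$ since $b>a\ge 0$, and $q<b\le\sum_{i\in N}w'_i=1$ because $N$ contains a minimal winning coalition; thus $0<q<1$. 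So $(q,w')$ is the desired strict normalized representation, and the box argument of the first paragraph finishes the proof.

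I expect the only genuine subtlety to be the positivity step: one must resist using a dummy-revealing representation (condition (2) of the first lemma), since forcing zero weights would immediately collapse the dimension and destroy the box. The barycenter perturbation circumvents exactly this. The remaining work — verifying that a common $\alpha$ serves all inequalities at once — is routine continuity bookkeeping over a finite list of affine constraints.
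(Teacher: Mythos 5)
Your proof is correct, but it reaches the interior point by a different mechanism than the paper. You argue softly: via Lemma~\ref{lemma_set_of_normalized_feasible_weights} you perturb an arbitrary feasible normalized weight vector toward the barycenter of the simplex, making all weights strictly positive while the finitely many strict feasibility inequalities persist, and then place the quota at the midpoint between the largest maximal-losing weight $a$ and the smallest minimal-winning weight $b$; this yields a point of the representation polyhedron at which every defining constraint (including $w_i\ge 0$ and $q\le 1$) holds strictly, and a common $\alpha>0$ exists by continuity of finitely many affine functions of $(\delta_0,\dots,\delta_{n-1})$. The paper instead works quantitatively from an integer representation (part (3) of the first lemma): there every winning coalition weighs at least $q$ and every losing one at most $q-1$, so shifting the quota to $q-\frac{2}{5}$ creates explicit margins of $\frac{2}{5}$ and $\frac{3}{5}$ on the two sides; rescaling so that $w_i\ge 1$ for all $i$ — the integer analogue of your barycenter shift, serving the same purpose of keeping dummies at positive weight — lets perturbations of size $\frac{1}{5n}$ be absorbed, and after normalizing by $s=\sum_i w_i$ one even obtains the explicit value $\alpha=\frac{1}{10ns}$. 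What your route buys is conceptual economy: no integrality is needed, and the lemma is exposed as the statement that the representation polyhedron has nonempty relative interior. What the paper's route buys is an explicit, computable box size in terms of an integer representation, with no appeal to an unspecified ``sufficiently small'' $\varepsilon$ or $\alpha$. Both proofs share the same skeleton — exhibit a representation with all constraints strict, then perturb — and your closing observation that one must avoid dummy-revealing representations is exactly the pitfall the paper's $w_i\ge 1$ normalization sidesteps as well.
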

\begin{proof}
  Let $(q,w_1,\dots,w_n)$ be an integer representation of $v$, i.e.\ the weight of each winning coalition is at least $q$, and 
  the weight of each losing coalition is at most $q-1$. Since $\big((n+1)q,(n+1)w_1+1,\dots,(n+1)w_n\big)$ is also
  an integer representation of $v$, we additionally assume w.l.o.g.\ that $w_i\ge 1$ for all $1\le i\le n$. One can easily check
  that also
  $
    \left(q-\frac{2}{5}+\tilde{\delta}_0,w_1+\tilde{\delta}_1,\dots,w_n+\tilde{\delta}_n\right)
  $
  is a representation of $v$ for all $\tilde{\delta}_i\in\left[-\frac{1}{5n},\frac{1}{5}{n}\right]$, $0\le i\le n$. 
  With $s=\sum_{i=1}^n w_i$ let $\tilde{q}=\frac{1}{s}\cdot\left(q-\frac{2}{5}\right)$ and $\tilde{w}_i=\frac{1}{s}\cdot w_i$ 
  for all $1\le i\le n-1$. The choice of a suitable $\alpha$ is a bit technical ($\alpha=\frac{1}{10ns}$ does work), but its existence 
  is guaranteed from our construction.
\end{proof}

\begin{definition}
  For each weighted majority game $v$ the (normalized) weight polyhedron $P^{\text{weight}}(v)$ is given by $P^{\text{weight}}(v)=
  \big\{w\in\mathbb{R}^n_{\ge 0}\mid \Vert w\Vert_1=1,\,w(S)\ge w(T)\forall\text{ min.}$ $\text{winning }S
    \text{ and all max.~losing }T\big\}
  $.
\end{definition}

By fixing the quota at a suitable value we can directly conclude from Lemma~\ref{lemma_full_dimension}:

\begin{corollary}
  The $n-1$-dimensional volume of $P^{\text{weight}}(v)$ is non-zero for each weighted majority game $v$.  
\end{corollary}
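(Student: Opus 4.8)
The plan is to deduce the corollary directly from Lemma~\ref{lemma_full_dimension} by exhibiting a full-dimensional box contained in the weight polyhedron $P^{\text{weight}}(v)$. The essential point is that $P^{\text{weight}}(v)$ lives in the hyperplane $\sum_{i=1}^n w_i=1$, so its natural dimension is $n-1$, and to show its $(n-1)$-dimensional volume is non-zero it suffices to produce a single $(n-1)$-dimensional set of positive volume sitting inside it.

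First I would invoke Lemma~\ref{lemma_full_dimension} to obtain positive reals $\tilde{q},\tilde{w}_1,\dots,\tilde{w}_{n-1}$ and a parameter $\alpha>0$ such that the perturbed vector
\begin{equation*}
  \left(\tilde{q}+\delta_0,\tilde{w}_1+\delta_1,\dots,\tilde{w}_{n-1}+\delta_{n-1},1-\sum_{i=1}^{n-1}\left(\tilde{w}_i+\delta_i\right)\right)
\end{equation*}
is a normalized representation of $v$ for all $\delta_i\in[-\alpha,\alpha]$. Next I would simply forget the quota coordinate: projecting onto the weight coordinates shows that for every choice of $\delta_1,\dots,\delta_{n-1}\in[-\alpha,\alpha]$ the weight vector
\begin{equation*}
  w=\left(\tilde{w}_1+\delta_1,\dots,\tilde{w}_{n-1}+\delta_{n-1},1-\sum_{i=1}^{n-1}\left(\tilde{w}_i+\delta_i\right)\right)
\end{equation*}
is feasible for $v$ (a suitable quota exists, namely $\tilde q+\delta_0$ for any admissible $\delta_0$), normalized, and non-negative for $\alpha$ small enough, hence lies in $P^{\text{weight}}(v)$.

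I would then identify $P^{\text{weight}}(v)$ with its image under the affine chart that eliminates the last coordinate via $w_n=1-\sum_{i=1}^{n-1}w_i$, so that the polyhedron is realized as a full-dimensional body in $\mathbb{R}^{n-1}$ and its $(n-1)$-volume is just the ordinary Lebesgue measure of this image. Under this identification the set constructed above contains the cube $\prod_{i=1}^{n-1}[\tilde{w}_i-\alpha,\tilde{w}_i+\alpha]$, which has Lebesgue measure $(2\alpha)^{n-1}>0$. By monotonicity of volume, the $(n-1)$-dimensional volume of $P^{\text{weight}}(v)$ is at least $(2\alpha)^{n-1}>0$, which proves the claim.

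The only mild obstacle is bookkeeping rather than mathematics: one must check that the constant map eliminating the quota genuinely lands in $P^{\text{weight}}(v)$ and that the translation between the $(n-1)$-volume measured in the hyperplane $\sum w_i=1$ and the Lebesgue measure of the projected cube differs only by a fixed positive Jacobian factor (the metric distortion of projecting the hyperplane onto a coordinate subspace), so positivity is preserved. Since Lemma~\ref{lemma_full_dimension} already supplies the crucial $\alpha>0$, no further analysis of the facet structure of $P^{\text{weight}}(v)$ is required.
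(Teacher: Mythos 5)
Your proposal is correct and follows essentially the same route as the paper, which likewise deduces the corollary directly from Lemma~\ref{lemma_full_dimension} by fixing the quota at a suitable value and using the resulting $(n-1)$-dimensional box of feasible weight vectors. The additional bookkeeping you mention (the affine chart eliminating $w_n$ and the positive Jacobian factor) is a fair elaboration of what the paper leaves implicit.
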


\begin{lemma}
  \label{lemma_compute_average_weight_index}
  The average weight index of a weighted majority game $v$ is given by
  \begin{equation}
    \frac{1}{\int_{P^{\text{weight}}(v)}\operatorname{d}w}\cdot\left(
    \int_{P^{\text{weight}}(v)}w_1\operatorname{d}w,\dots,\int_{P^{\text{weight}}(v)}w_n\operatorname{d}w
    \right). 
  \end{equation}
\end{lemma}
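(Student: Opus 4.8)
The plan is to recognize the average weight index as the centroid (barycenter) of the set of normalized feasible weight vectors, and then to show that replacing this set by its closure $P^{\text{weight}}(v)$ affects neither the coordinate integrals nor the normalizing volume, because the two sets differ only by a null set.

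First I would fix what ``average'' means for a continuum of points. By Lemma~\ref{lemma_set_of_normalized_feasible_weights} the set of normalized feasible weight vectors is the relatively open polytope
$$F=\left\{w\in\mathbb{R}^n_{\ge 0}\mid \Vert w\Vert_1=1,\ w(S)>w(T)\text{ for all min.\ winning }S\text{ and max.\ losing }T\right\},$$
a subset of the $(n-1)$-dimensional affine hyperplane $H=\{w\mid\Vert w\Vert_1=1\}$. The average of all points of $F$ is, by definition, the centroid
$$\frac{1}{\int_F\operatorname{d}w}\left(\int_F w_1\operatorname{d}w,\dots,\int_F w_n\operatorname{d}w\right),$$
with all integrals taken in the $(n-1)$-dimensional Lebesgue measure on $H$. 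Equivalently, as in the worked example, one may eliminate $w_n=1-\sum_{i<n}w_i$ and integrate over the projection of $F$ into $\mathbb{R}^{n-1}$; the resulting constant Jacobian factor is the same for numerator and denominator and cancels in the ratio.

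Next I would compare $F$ with $P^{\text{weight}}(v)$. The latter arises from $F$ by relaxing each strict inequality $w(S)>w(T)$ to $w(S)\ge w(T)$, so $F\subseteq P^{\text{weight}}(v)$ and the difference $P^{\text{weight}}(v)\setminus F$ is contained in the finite union of the sets $\{w\in H\mid w(S)=w(T)\}$. Each of these is a proper affine subspace of the $(n-1)$-dimensional space $H$, hence carries $(n-1)$-dimensional measure zero, so $P^{\text{weight}}(v)\setminus F$ is a null set. Since the Lebesgue integral is insensitive to null sets, $\int_F f\operatorname{d}w=\int_{P^{\text{weight}}(v)}f\operatorname{d}w$ for each $f\in\{1,w_1,\dots,w_n\}$. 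Finally, the preceding Corollary guarantees $\int_{P^{\text{weight}}(v)}\operatorname{d}w>0$, so the denominator does not vanish and the centroid is well defined; substituting these equalities into the centroid formula for $F$ yields exactly the claimed expression. The only delicate point is the measure-theoretic bookkeeping: one must consistently work with the $(n-1)$-dimensional measure on $H$, on which the exceptional hyperplanes are genuinely lower-dimensional — which is precisely the role played by full-dimensionality from Lemma~\ref{lemma_full_dimension}.
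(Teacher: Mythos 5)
Your proposal is correct and follows essentially the same route as the paper: the paper justifies this lemma through the discussion after the worked example (replacing the strict inequalities of Lemma~\ref{lemma_set_of_normalized_feasible_weights} by non-strict ones is permitted because the exceptional sets have volume zero), with Lemma~\ref{lemma_full_dimension} and its corollary supplying the full-dimensionality and the non-vanishing denominator, exactly as you invoke them. Your write-up merely makes the measure-theoretic bookkeeping (the null-set comparison of $F$ with $P^{\text{weight}}(v)$ inside the hyperplane $\Vert w\Vert_1=1$, and the cancelling Jacobian after eliminating $w_n$) more explicit than the paper does.
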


We remark that the average weight representations is the center of mass of the polyhedron $P^{\text{weight}}(v)$.

\subsection{The average representation index}

As mentioned already in the introduction, one may consider the quota as being part of the weighted representation. 
To this end we introduce:

\begin{definition}
  The average representation index of a weighted majority game $v$ is the average of all normalized\footnote{Taking all 
  representations instead of the normalized ones does not make a difference.} representations of $v$.
\end{definition}

\begin{definition}
  For each weighted majority game $v$ the (normalized) representation polyhedron $P^{\text{rep}}(v)$ is given by
  $P^{\text{rep}}(v)=\Big\{(q,w)\in\mathbb{R}^{n+1}_{\ge 0}\mid \sum_{i=1}^n w_i=1,\,q\le 1,\,w(S)\ge q \,
  \forall\text{ min.~winning coalitions }S,
   w(T)\le q \,\forall \text{ max.~losing coalitions }T\Big\}$.
\end{definition}

Using Lemma~\ref{lemma_set_of_normalized_representations} and Lemma~\ref{lemma_full_dimension} we conclude:

\begin{lemma}
  \label{lemma_compute_average_representation_index}
  The average representation index of a weighted majority game $v$ is given by
  \begin{equation}
    \frac{1}{\int_{P^{\text{rep}}(v)}\operatorname{d}(q,w)}\cdot\left(
    \int_{P^{\text{rep}}(v)}w_1\operatorname{d}(q,w),\dots,\int_{P^{\text{rep}}(v)}w_n\operatorname{d}(q,w)
    \right). 
  \end{equation}
\end{lemma}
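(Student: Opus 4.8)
The plan is to recognize the claimed expression as the center of mass (centroid) of the set of normalized representations, mirroring exactly the weight case treated in Lemma~\ref{lemma_compute_average_weight_index} and in the worked example for $[3;2,1,1]$, the only new feature being the additional quota coordinate. First I would note that a power index takes values in $\mathbb{R}^n$, whereas a normalized representation $(q,w_1,\dots,w_n)$ lives in $\mathbb{R}^{n+1}$; averaging the representations produces a vector $(\bar q,\bar w_1,\dots,\bar w_n)$, and the index records only its $n$ weight coordinates, the averaged quota $\bar q$ being discarded. Hence it suffices to show that, for each voter $i$, the mean value of the coordinate function $w_i$ over the set of all normalized representations of $v$ equals $\int_{P^{\text{rep}}(v)}w_i\operatorname{d}(q,w)$ divided by $\int_{P^{\text{rep}}(v)}\operatorname{d}(q,w)$.

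By Lemma~\ref{lemma_set_of_normalized_representations} the set of normalized representations is described by $w(S)\ge q$ for all minimal winning $S$, by $w(T)<q$ for all maximal losing $T$, together with $w\ge 0$, $\sum_{i=1}^n w_i=1$ and $q\in(0,1]$. This is precisely the polyhedron $P^{\text{rep}}(v)$ with the strict inequalities on the losing-coalition side in place of the non-strict ones; the two sets therefore differ only by the boundary facets on which some $w(T)<q$ degenerates to $w(T)=q$.

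Next I would invoke Lemma~\ref{lemma_full_dimension}: after eliminating $w_n=1-\sum_{i=1}^{n-1}w_i$ the region sits in the $n$-dimensional coordinate space $(q,w_1,\dots,w_{n-1})$ and contains a full-dimensional box (indexed by $\delta_0,\dots,\delta_{n-1}$), so by the preceding Corollary its $n$-dimensional volume is strictly positive. Since $w$ lies in the unit simplex and $q\in[0,1]$, the region is bounded, so every integral $\int_{P^{\text{rep}}(v)}w_i\operatorname{d}(q,w)$ and the denominator $\int_{P^{\text{rep}}(v)}\operatorname{d}(q,w)$ are finite and the latter is nonzero. The boundary pieces removed in the previous paragraph form a finite union of lower-dimensional faces and thus carry $n$-dimensional measure zero, so replacing the strict inequalities $w(T)<q$ by $w(T)\le q$ leaves all these integrals unchanged; the average over the open set of normalized representations therefore equals the average over the closed polyhedron $P^{\text{rep}}(v)$.

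Finally, the mean of a coordinate function over a bounded full-dimensional set is by definition its integral divided by the volume, and the substitution $w_n=1-\sum_{i<n}w_i$ is an affine change of variables with constant Jacobian that cancels between numerator and denominator. This yields the stated formula for every component $w_i$, including $i=n$. The one genuinely nonroutine point — and the step I expect to be the main obstacle — is the full-dimensionality together with the measure-zero boundary argument that licenses passing from the set of representations (with its strict inequalities) to the closed polyhedron $P^{\text{rep}}(v)$; this is exactly the content that Lemma~\ref{lemma_full_dimension} and the ensuing Corollary are built to provide.
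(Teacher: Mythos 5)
Your proof is correct and follows exactly the route the paper intends: the paper gives no written proof but simply states that the lemma follows from Lemma~\ref{lemma_set_of_normalized_representations} (the inequality description of the set of normalized representations) combined with Lemma~\ref{lemma_full_dimension} (full dimensionality, so the measure-zero boundary facets where $w(T)=q$ can be absorbed into the closed polyhedron $P^{\text{rep}}(v)$), which is precisely the argument you spell out. Your additional details --- the finiteness and nonvanishing of the denominator, the constant-Jacobian elimination of $w_n$, and the discarded averaged quota $\bar q$ --- are all correct elaborations of the same approach rather than a different one.
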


For our example $v=[3;2,1,1]$ we have
$$
  P^{\text{rep}}(v)=\big\{(q,w)\in\mathbb{R}^4_{\ge 0}\mid \sum\limits_{i=1}^3 w_i=1, w_1+w_2\ge q, w_1+w_3\ge q,w_1\le q,w_2+w_3\le q\big\}, 
$$
and
\begin{eqnarray*}
  \int\limits_{P^{\text{rep}}(v)}\operatorname{d}(q,w) &=& \int\limits_{\frac{1}{2}}^{\frac{2}{3}}\int\limits_{1-q}^{q}\int\limits_{q-w_1}^{1-q}
  \,\operatorname{d}w_2\operatorname{d}w_1\operatorname{d}q+\int\limits_{\frac{2}{3}}^{1}\int\limits_{2q-1}^{q}\int\limits_{q-w_1}^{1-q}
  \,\operatorname{d}w_2\operatorname{d}w_1\operatorname{d}q\\
  &=& \frac{5}{648}+\frac{1}{162}=\frac{1}{72},\\
  \int\limits_{P^{\text{rep}}(v)}w_1\operatorname{d}(q,w) &=& \int\limits_{\frac{1}{2}}^{\frac{2}{3}}\int\limits_{1-q}^{q}\int\limits_{q-w_1}^{1-q}w_1
  \,\operatorname{d}w_2\operatorname{d}w_1\operatorname{d}q
  +\int\limits_{\frac{2}{3}}^{1}\int\limits_{2q-1}^{q}\int\limits_{q-w_1}^{1-q}w_1
  \,\operatorname{d}w_2\operatorname{d}w_1\operatorname{d}q\\
  &=& \frac{31}{7776}+\frac{1}{243}=\frac{7}{864},\\
  \int\limits_{P^{\text{rep}}(v)}w_2\operatorname{d}(q,w) &=& \int\limits_{\frac{1}{2}}^{\frac{2}{3}}\int\limits_{1-q}^{q}\int\limits_{q-w_1}^{1-q}w_2
  \,\operatorname{d}w_2\operatorname{d}w_1\operatorname{d}q+\int\limits_{\frac{2}{3}}^{1}\int\limits_{2q-1}^{q}\int\limits_{q-w_1}^{1-q}w_2
  \,\operatorname{d}w_2\operatorname{d}w_1\operatorname{d}q\\
  &=& \frac{29}{15552}+\frac{1}{972}=\frac{5}{1728},
\end{eqnarray*}
so that the average representation index is given by $\left(\frac{7}{12},\frac{5}{24},\frac{5}{24}\right)$.

\subsection{Properties of the new indices}
The two newly introduced indices share several of the properties commonly required for a power index. Three of four properties 
in Definition~\ref{def_des_properties} are satisfied.

\begin{lemma}
  \label{lemma_prop_1}
  The average weight and the average representation index are symmetric, positive, and efficient, satisfies 
  strong monotonicity, but do not satisfy the dummy property. 
\end{lemma}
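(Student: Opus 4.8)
The plan is to treat both indices simultaneously, since by Lemma~\ref{lemma_compute_average_weight_index} and Lemma~\ref{lemma_compute_average_representation_index} each index is (the weight part of) the center of mass of a full-dimensional polyhedron, so the four easy properties will follow from elementary features of that center of mass, while only strong monotonicity requires a structural argument.

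First, symmetry. For a bijection $\tau$ the minimal winning and maximal losing coalitions of $\tau v$ are exactly the $\tau$-images of those of $v$, so the defining inequalities of $P^{\text{weight}}(\tau v)$ (respectively $P^{\text{rep}}(\tau v)$) are obtained from those of $v$ by relabelling the weight coordinates according to $\tau$, the quota coordinate being fixed. Hence the polyhedron of $\tau v$ is the image of that of $v$ under this volume-preserving coordinate permutation, and its center of mass is the corresponding permutation of the center of mass of $v$, which is precisely $g_{\tau(i)}(\tau v)=g_i(v)$. Positivity and efficiency are then immediate: on either polyhedron one has $w_i\ge 0$ and $\sum_{i=1}^n w_i=1$, so every coordinate integral is non-negative, while summing the numerators gives $\int_P\sum_i w_i=\int_P 1$, the volume; dividing by the volume yields $\sum_i g_i(v)=1$. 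Since this volume is non-zero by Lemma~\ref{lemma_full_dimension}, the index is well defined and non-negative, and efficiency forces $g(v)\neq 0$.

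The heart of the argument is strong monotonicity, which I read as monotonicity with respect to the desirability relation: if voter $i$ is at least as desirable as voter $j$, meaning $v(S\cup\{j\})\le v(S\cup\{i\})$ for every $S\subseteq N\setminus\{i,j\}$, then $g_i(v)\ge g_j(v)$. I would distinguish two cases. If $i$ and $j$ are mutually desirable, then $v(S\cup\{i\})=v(S\cup\{j\})$ for all such $S$, so $i$ and $j$ are interchangeable and $g_i(v)=g_j(v)$ follows from the symmetry already established. Otherwise there is a witness $S_0\subseteq N\setminus\{i,j\}$ with $S_0\cup\{i\}$ winning and $S_0\cup\{j\}$ losing. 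By Lemma~\ref{lemma_set_of_normalized_feasible_weights} every feasible weight vector separates winning from losing coalitions strictly, so $w(S_0\cup\{i\})>w(S_0\cup\{j\})$, that is $w_i>w_j$, throughout the relative interior of $P^{\text{weight}}(v)$; for $P^{\text{rep}}(v)$ the same follows from $w(S_0\cup\{i\})\ge q\ge w(S_0\cup\{j\})$ together with the strictness of the defining inequalities on the interior. As this interior carries positive volume, integrating the strictly positive function $w_i-w_j$ gives $g_i(v)>g_j(v)$. Combining the two cases yields $g_i(v)\ge g_j(v)$ whenever $i$ is at least as desirable as $j$.

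Finally, the dummy property fails, and this is settled by a single example rather than by a general argument. For $v=[1;1,0]$ voter~$2$ is a dummy, yet the feasible weights form the region $w_1>w_2\ge 0$, $w_1+w_2=1$, of positive length, on which $w_2$ is positive on a set of positive measure; the resulting average weight index is $\big(\tfrac{3}{4},\tfrac{1}{4}\big)$, and the analogous integral over $P^{\text{rep}}(v)$ again leaves voter~$2$ with positive power. More conceptually, $P^{\text{weight}}(v)$ and $P^{\text{rep}}(v)$ are full-dimensional by Lemma~\ref{lemma_full_dimension}, so no coordinate $w_d$ vanishes identically on them; as $w_d\ge 0$, its integral is strictly positive and the dummy $d$ receives positive power. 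The main obstacle is therefore not any single computation but the correct reading of ``strong monotonicity'': once it is phrased through the desirability relation, the strict witness $S_0$ together with the strict separation of Lemma~\ref{lemma_set_of_normalized_feasible_weights} does all the work.
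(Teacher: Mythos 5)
Your proposal is correct, and it does strictly more than the paper's own proof, which consists of two sentences: symmetry, positivity, and efficiency are declared ``inherent in the definition'' of both indices, and the dummy property is refuted by the very example $[1;1,0]$ you use. Your spelled-out versions of the first three properties (permutation of coordinates preserves volume and center of mass; $w_i\ge 0$ and $\sum_i w_i=1$ on the polyhedron plus non-zero volume from Lemma~\ref{lemma_full_dimension} give positivity, well-definedness, and efficiency) are faithful expansions of that one-liner. The genuine divergence is strong monotonicity: the paper asserts it in the lemma statement but never defines the term and its proof is silent on it, whereas you supply a complete argument --- reading it through the desirability relation, dispatching equally desirable voters via the symmetry already proved, and otherwise producing a witness $S_0$ with $S_0\cup\{i\}$ winning and $S_0\cup\{j\}$ losing, so that $w_i>w_j$ holds on the interior of $P^{\text{weight}}(v)$ (resp.\ $P^{\text{rep}}(v)$), which has positive measure, forcing the strict inequality after integration. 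This is a real contribution relative to the printed proof; your conceptual remark that full dimensionality makes \emph{every} dummy receive positive power is likewise stronger than the single counterexample the paper offers. Two small points to tighten: Lemma~\ref{lemma_set_of_normalized_feasible_weights} only constrains pairs (minimal winning, maximal losing), so the step $w(S_0\cup\{i\})>w(S_0\cup\{j\})$ needs the sandwich $w(S_0\cup\{i\})\ge w(S)>w(T)\ge w(S_0\cup\{j\})$ with $S\subseteq S_0\cup\{i\}$ minimal winning and $T\supseteq S_0\cup\{j\}$ maximal losing, using non-negativity of the weights; and with the paper's convention $\tau v(S)=v(\tau(S))$, the minimal winning coalitions of $\tau v$ are the $\tau^{-1}$-images, not the $\tau$-images, of those of $v$ --- a harmless relabelling slip that does not affect the argument.
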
 
\begin{proof}
  Symmetry, positivity, and efficiency are inherent in the definition of both indices. The violation of the 
  dummy property can e.g.\ be seen at example $[1;1,0]$. 
\end{proof}

The later shortcoming can be repaired using a quite general approach.

\begin{lemma}
  Given a sequence of power indices $g^n:\mathcal{C}_n\rightarrow\mathbb{R}^n$ for all $n\in\mathbb{N}$, let
  $\tilde{g}^n:\mathcal{C}_n\rightarrow\mathbb{R}^n$ be defined via $\tilde{g}^n_i(v)=g^m_i(v')$ for all non-dummies $i$ 
  and by $\tilde{g}^n_j(v)=0$ for all dummies $j$, where $m$ is the number of non-dummies in $v$ 
  and $v'$ arises from $v$ by dropping the dummies\footnote{Given a weighted majority game $v:2^N\rightarrow\{0,1\}$ with 
  $S=\{i\in N\mid i \text{ is dummy}\}$, we define the dummy reduced game $v':2^{N\backslash S}\rightarrow \{0,1\}$ via 
  $v'(T)=v(T)$ for all $T\subseteq N\backslash S$.} All $\tilde{g}^n$ satisfy the dummy property. 
\end{lemma}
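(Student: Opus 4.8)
The plan is to observe that the dummy property is essentially built into the definition of $\tilde{g}^n$, so that the real content of the statement is the well-posedness of the construction rather than any substantive computation. Concretely, the dummy property (Definition~\ref{def_des_properties}(4)) demands $\tilde{g}^n_j(v)=0$ for every dummy $j$ of every $v\in\mathcal{C}_n$, and this is exactly the second branch of the definition. So once the first branch, $\tilde{g}^n_i(v)=g^m_i(v')$ for non-dummies $i$, is shown to make sense, there is nothing left to verify.

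First I would pin down the reduced game. Given $v\in\mathcal{C}_n$, the set $S$ of dummies is determined by $v$, hence so are $m=n-|S|$ and the dummy-reduced game $v'$ on the voter set $N\setminus S$. I would check $v'\in\mathcal{C}_m$: for $\mathcal{C}_n=\mathcal{S}_n$ the conditions $v'(\emptyset)=0$, monotonicity, and $v'(N\setminus S)=v(N)=1$ (the last by stripping dummies one at a time using $v(T)=v(T\cup\{j\})$) are all inherited from $v$; for $\mathcal{C}_n=\mathcal{W}_n$ a dummy-revealing representation (available by the first lemma of Section~2) restricts to a representation of $v'$, since on coalitions avoiding $S$ both the weights and the quota are unchanged. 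After identifying $N\setminus S$ with $\{1,\dots,m\}$ by the order-preserving bijection, $g^m(v')$ is defined and $g^m_i(v')$ names a genuine component for each non-dummy $i$.

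With well-posedness in hand, the dummy property follows immediately and uniformly in $n$: for any $v$ and any dummy $j$ of $v$ we have $\tilde{g}^n_j(v)=0$ by construction. The main obstacle, such as it is, is therefore purely bookkeeping: one must make sure the reduction is coherent and non-circular. Here I would record that $v'$ itself has no dummies — a non-dummy $i$ of $v$ has a witnessing coalition $S_0\subseteq N\setminus\{i\}$ with $v(S_0\cup\{i\})\neq v(S_0)$, and deleting the dummies from $S_0$ leaves these two values unchanged, so $i$ stays a non-dummy in $v'$. This guarantees that the values $g^m_i(v')$ are the intended ones, and that applying the construction does not secretly re-introduce the very dummies it removed.
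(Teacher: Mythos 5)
Your proof is correct and takes the same route the paper implicitly does: the paper states this lemma without any proof, treating the dummy property as immediate from the second branch of the definition of $\tilde{g}^n$, which is exactly your core observation. Your additional well-posedness checks (that $v'$ lies in the appropriate class $\mathcal{C}_m$, via restriction of a dummy-revealing representation in the weighted case, and that $v'$ contains no dummies) are sound and merely make explicit the bookkeeping the paper leaves tacit.
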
 

We call $\tilde{g}^n$ the dummy revealing version of a given sequence of power indices $g^n$. For the computation 
of the dummy revealing version we just have to compute the dummy reduced game $v'$ and its corresponding power distribution.

\subsection{Algorithmic computation of the new indices}

The computations from Lemma~\ref{lemma_compute_average_weight_index} and Lemma~\ref{lemma_compute_average_representation_index} can 
easily be performed using the software package \texttt{LattE} \cite{latte_integral}, i.e., there is no need to perform the nasty 
case differentiations and evaluations of multi integrals, as done for our example, by hand. 

\begin{table}[htp]
  \begin{center}
    \tiny
    \begin{tabular}{cccccc}
      game & av.~weight & av.~rep.\ & game & av.~weight & av.~rep.\ \\
      \hline
      $[1;1]$ & $(1)$ & $(1)$ & $[3;2,1,1,0]$ & $\left(\frac{67}{120},\frac{47}{240},\frac{47}{240},\frac{1}{20}\right)$ & $\left(\frac{41}{75},\frac{31}{150},\frac{31}{150},\frac{1}{25}\right)$ \\[1mm]
      $[1;1,0]$ & $\left(\frac{3}{4},\frac{1}{4}\right)$ & $\left(\frac{5}{6},\frac{1}{6}\right)$ & $[1;1,1,1,1]$ & $\left(\frac{1}{4},\frac{1}{4},\frac{1}{4},\frac{1}{4}\right)$ & $\left(\frac{1}{4},\frac{1}{4},\frac{1}{4},\frac{1}{4}\right)$ \\[1mm]
      $[1;1,1]$ & $\left(\frac{1}{2},\frac{1}{2}\right)$ & $\left(\frac{1}{2},\frac{1}{2}\right)$ & $[2;1,1,1,1]$ & $\left(\frac{1}{4},\frac{1}{4},\frac{1}{4},\frac{1}{4}\right)$ & $\left(\frac{1}{4},\frac{1}{4},\frac{1}{4},\frac{1}{4}\right)$ \\[1mm]
      $[2;1,1]$ & $\left(\frac{1}{2},\frac{1}{2}\right)$ & $\left(\frac{1}{2},\frac{1}{2}\right)$ & $[3;1,1,1,1]$ & $\left(\frac{1}{4},\frac{1}{4},\frac{1}{4},\frac{1}{4}\right)$ & $\left(\frac{1}{4},\frac{1}{4},\frac{1}{4},\frac{1}{4}\right)$ \\[1mm]
      $[1;1,0,0]$ & $\left(\frac{2}{3},\frac{1}{6},\frac{1}{6}\right)$ & $\left(\frac{3}{4},\frac{1}{8},\frac{1}{8}\right)$ & $[4;1,1,1,1]$ & $\left(\frac{1}{4},\frac{1}{4},\frac{1}{4},\frac{1}{4}\right)$ & $\left(\frac{1}{4},\frac{1}{4},\frac{1}{4},\frac{1}{4}\right)$ \\[1mm]
      $[1;1,1,0]$ & $\left(\frac{4}{9},\frac{4}{9},\frac{1}{9}\right)$ & $\left(\frac{11}{24},\frac{11}{24},\frac{1}{12}\right)$ & $[4;2,1,1,1]$ & $\left(\frac{23}{48},\frac{25}{144},\frac{25}{144},\frac{25}{144}\right)$ & $\left(\frac{139}{300},\frac{161}{900},\frac{161}{900},\frac{161}{900}\right)$ \\[1mm]
      $[2;1,1,0]$ & $\left(\frac{4}{9},\frac{4}{9},\frac{1}{9}\right)$ & $\left(\frac{11}{24},\frac{11}{24},\frac{1}{12}\right)$ & $[3;2,1,1,1]$ & $\left(\frac{7}{16},\frac{3}{16},\frac{3}{16},\frac{3}{16}\right)$ & $\left(\frac{43}{100},\frac{19}{100},\frac{19}{100},\frac{19}{100}\right)$ \\[1mm]
      $[1;1,1,1]$ & $\left(\frac{1}{3},\frac{1}{3},\frac{1}{3}\right)$ & $\left(\frac{1}{3},\frac{1}{3},\frac{1}{3}\right)$ & $[2;2,1,1,1]$ & $\left(\frac{23}{48},\frac{25}{144},\frac{25}{144},\frac{25}{144}\right)$ & $\left(\frac{139}{300},\frac{161}{900},\frac{161}{900},\frac{161}{900}\right)$ \\[1mm]
      $[2;1,1,1]$ & $\left(\frac{1}{3},\frac{1}{3},\frac{1}{3}\right)$ & $\left(\frac{1}{3},\frac{1}{3},\frac{1}{3}\right)$ & $[3;2,2,1,1]$ & $\left(\frac{83}{240},\frac{83}{240},\frac{37}{240},\frac{37}{240}\right)$ & $\left(\frac{103}{300},\frac{103}{300},\frac{47}{300},\frac{47}{300}\right)$ \\[1mm]
      $[3;1,1,1]$ & $\left(\frac{1}{3},\frac{1}{3},\frac{1}{3}\right)$ & $\left(\frac{1}{3},\frac{1}{3},\frac{1}{3}\right)$ & $[4;2,2,1,1]$ & $\left(\frac{83}{240},\frac{83}{240},\frac{37}{240},\frac{37}{240}\right)$ & $\left(\frac{103}{300},\frac{103}{300},\frac{47}{300},\frac{47}{300}\right)$ \\[1mm]
      $[2;2,1,1]$ & $\left(\frac{11}{18},\frac{7}{36},\frac{7}{36}\right)$ & $\left(\frac{7}{12},\frac{5}{24},\frac{5}{24}\right)$ & $[5;2,2,1,1]$ & $\left(\frac{19}{48},\frac{19}{48},\frac{5}{48},\frac{5}{48}\right)$ & $\left(\frac{23}{60},\frac{23}{60},\frac{7}{60},\frac{7}{60}\right)$ \\[1mm]
      $[3;2,1,1]$ & $\left(\frac{11}{18},\frac{7}{36},\frac{7}{36}\right)$ & $\left(\frac{7}{12},\frac{5}{24},\frac{5}{24}\right)$ & $[2;2,2,1,1]$ & $\left(\frac{19}{48},\frac{19}{48},\frac{5}{48},\frac{5}{48}\right)$ & $\left(\frac{23}{60},\frac{23}{60},\frac{7}{60},\frac{7}{60}\right)$ \\[1mm]
      $[1;1,0,0,0]$ & $\left(\frac{5}{8},\frac{1}{8},\frac{1}{8},\frac{1}{8}\right)$ & $\left(\frac{7}{10},\frac{1}{10},\frac{1}{10},\frac{1}{10}\right)$ & $[4;3,1,1,1]$ & $\left(\frac{3}{5},\frac{2}{15},\frac{2}{15},\frac{2}{15}\right)$ & $\left(\frac{29}{50},\frac{7}{50},\frac{7}{50},\frac{7}{50}\right)$ \\[1mm]
      $[1;1,1,0,0]$ & $\left(\frac{5}{12},\frac{5}{12},\frac{1}{12},\frac{1}{12}\right)$ & $\left(\frac{13}{30},\frac{13}{30},\frac{1}{15},\frac{1}{15}\right)$ & $[3;3,1,1,1]$ & $\left(\frac{3}{5},\frac{2}{15},\frac{2}{15},\frac{2}{15}\right)$ & $\left(\frac{29}{50},\frac{7}{50},\frac{7}{50},\frac{7}{50}\right)$\\[1mm]
      $[2;1,1,0,0]$ & $\left(\frac{5}{12},\frac{5}{12},\frac{1}{12},\frac{1}{12}\right)$ & $\left(\frac{13}{30},\frac{13}{30},\frac{1}{15},\frac{1}{15}\right)$ & $[3;3,2,1,1]$ & $\left(\frac{449}{840},\frac{227}{840},\frac{41}{420},\frac{41}{420}\right)$ & $\left(\frac{77}{150},\frac{41}{150},\frac{8}{75},\frac{8}{75}\right)$ \\[1mm]
      $[1;1,1,1,0]$ & $\left(\frac{5}{16},\frac{5}{16},\frac{5}{16},\frac{1}{16}\right)$ & $\left(\frac{19}{60},\frac{19}{60},\frac{19}{60},\frac{1}{20}\right)$ & $[5;3,2,1,1]$ & $\left(\frac{449}{840},\frac{227}{840},\frac{41}{420},\frac{41}{420}\right)$ & $\left(\frac{77}{150},\frac{41}{150},\frac{8}{75},\frac{8}{75}\right)$\\[1mm]
      $[2;1,1,1,0]$ & $\left(\frac{5}{16},\frac{5}{16},\frac{5}{16},\frac{1}{16}\right)$ & $\left(\frac{19}{60},\frac{19}{60},\frac{19}{60},\frac{1}{20}\right)$ & $[4;3,2,2,1]$ & $\left(\frac{193}{480},\frac{31}{120},\frac{31}{120},\frac{13}{160}\right)$ & $\left(\frac{119}{300},\frac{77}{300},\frac{77}{300},\frac{9}{100}\right)$ \\[1mm]
      $[3;1,1,1,0]$ & $\left(\frac{5}{16},\frac{5}{16},\frac{5}{16},\frac{1}{16}\right)$ & $\left(\frac{19}{60},\frac{19}{60},\frac{19}{60},\frac{1}{20}\right)$ & $[5;3,2,2,1]$ & $\left(\frac{193}{480},\frac{31}{120},\frac{31}{120},\frac{13}{160}\right)$ & $\left(\frac{119}{300},\frac{77}{300},\frac{77}{300},\frac{9}{100}\right)$ \\[1mm]
      $[2;2,1,1,0]$ & $\left(\frac{67}{120},\frac{47}{240},\frac{47}{240},\frac{1}{20}\right)$ & $\left(\frac{41}{75},\frac{31}{150},\frac{31}{150},\frac{1}{25}\right)$ 
    \end{tabular}
    \caption{The average weight and average representation index for small weighted majority games.}
    \label{table_average_indices}
  \end{center}
\end{table}

In Table~\ref{table_average_indices} we list the average weight and the average representation index for all weighted 
majority games with up to $4$~voters. We observe that the so-called dual games obtain the same average weight and average 
representation index, which can indeed be proved easily.

\section{Conclusion and future research}

We have shown how to construct power indices that respect proportionality between power and weight from average representations
of a game. By restricting the polyhedron implied by the set of minimal winning and maximal losing coalitions, we can obtain a 
representation that is also dummy revealing. This might be a hint that they principally could be suited to serve as a measurement 
for voting power in a certain context.
We do not claim that this indeed the case, but propose the challenge to eventually disqualify such a usage rigorously instead.
Having the ongoing search for the \textit{right index} in mind, we want to encourage (even) more research with respect to the 
question of indispensable properties of measurements of power.

The average representations themselves may have other uses too. They conveniently summarize the set of admissible 
representations of a weighted majority game into a unique representation, which can then be compared to power distributions 
of the classical power indices. We have already restricted the set of representations in order to obtain a dummy 
revealing index, but there still may be much more meaningful ways to introduce further restrictions. So our approach 
might lead to a cornucopia of power indices. 

We conclude the paper with a remark on integer weights. A normalization of voting weights is unreasonable if they are to 
represent the number of shares of a corporation, or the number of members of a political party. In these cases, we require the 
weights to be integers. However, there is still an interpretation of our indices in these cases, as the following convergence 
result suggests. 
 
Let us return to the initial example in the introduction, and consider the weighted majority game $v=[2;1,1,1]$. We have said 
that there are 1176 integer weight vectors being feasible for $v$ with a sum of weights 100. If we average those 
representation we obtain $\left(\frac{100}{3},\frac{100}{3},\frac{100}{3}\right)$ or a relative distribution of 
$\left(\frac{1}{3},\frac{1}{3},\frac{1}{3}\right)$, which is no surprise due to the inherent symmetry. Things get a bit more 
interesting if one considers the weighted majority game $v=[3;2,1,1]$. For a weight sum of 100, we have 1601 different weight 
vectors, and the averaged relative weight distribution is given by $(0.608832,0.195584,0.195584)$. 
For a weight sum of 1000, we obtain 166001 different weight vectors and $(0.610888,0.194556,0.194556)$.
For a weight sum of 10000, we obtain 16660001 different weight vectors and $(0.611089,0.194456,0.194456)$.
For a weight sum of 100000 we obtain 1666600001 different weight vectors and $(0.611109 0.194446 0.194446)$. The averaged 
relative weight distribution seems to converge to $\left(\frac{11}{18},\frac{7}{36},\frac{7}{36}\right)$, which is indeed the 
average weight index. This can be rigorously proven by numerically approximating the integrals of the definition 
of the average weight index using grid points only, and considering the limit of finer and finer equally distributed grids. The 
same is true if an integer valued quota is taken into account. In the limit, we would end up with the average representation index. 
Also the dummy revealing property can be transfered in this sense.


\end{document}